\documentclass[sigconf]{acmart}
%
\def\BibTeX{{\rm B\kern-.05em{\sc i\kern-.025em b}\kern-.08emT\kern-.1667em\lower.7ex\hbox{E}\kern-.125emX}}
\newtheorem{mydef}{Definition}
\newtheorem{mylemma}{Lemma}
\newtheorem{mythm}{Theorem}
\usepackage{tcolorbox}
\usepackage{multirow}
\usepackage[ruled,vlined,linesnumbered]{algorithm2e}
\copyrightyear{2020}
\acmYear{2020}
\setcopyright{acmlicensed}
\acmConference[xxxxxxxxxx]{Unknown Conference}{Unknown Date}{Unknown time}
\acmBooktitle{Unknown Book Title}
\acmPrice{15.00}
\acmDOI{10.1145/1122445.1122456}
\acmISBN{978-1-4503-9999-9/18/06}

%

%

%

%
\begin{document}

%
\title{DySky: Dynamic Skyline Queries on Uncertain Graphs}
%

\author{Suman Banerjee}
\authornote{The first author is financially supported by the Institute Post Doctoral Fellowship Grant by Indian Institute of Technology, Gandhinagar (Project No. MIS/IITGN/PD-SCH/201415/006).}
\affiliation{%
  \institution{Indian Institute of Technology}
  \city{Gandhinagar-382355}
  \state{India}
}
\email{suman.b@iitgn.ac.in}

\author{Bithika Pal}
\affiliation{%
  \institution{Indian Institute of Technology}
  \streetaddress{1 Th{\o}rv{\"a}ld Circle}
  \city{Kharagpur-721302}
  \country{India}}
\email{bithikapal@iitkgp.ac.in}


%
\renewcommand{\shortauthors}{Banerjee and Pal}

%
\begin{abstract}
Given a graph, and a set of query vertices (subset of the vertices), the dynamic skyline query problem returns a subset of data vertices (other than query vertices) which are not dominated by other data vertices based on certain distance measure. In this paper, we study the dynamic skyline query problem on uncertain graphs (DySky). The input to this problem is an uncertain graph, a subset of its nodes as query vertices,  and the goal here is to return all the data vertices which are not dominated by others. We employ two distance measures in uncertain graphs, namely, \emph{Majority Distance}, and \emph{Expected Distance}. Our approach is broadly divided into three steps: \emph{Pruning}, \emph{Distance Computation}, and \emph{Skyline Vertex Set Generation}. We implement the proposed methodology with three publicly available datasets and observe that it can find out skyline vertex set without taking much time even for million sized graphs if expected distance is concerned. Particularly, the pruning strategy reduces the computational time significantly.
\end{abstract}

%
%
 \begin{CCSXML}
<ccs2012>
<concept>
<concept_id>10002951.10002952</concept_id>
<concept_desc>Information systems~Data management systems</concept_desc>
<concept_significance>500</concept_significance>
</concept>
<concept>
<concept_id>10002950.10003624.10003633</concept_id>
<concept_desc>Mathematics of computing~Graph theory</concept_desc>
<concept_significance>300</concept_significance>
</concept>
<concept>
<concept_id>10003752.10003753.10003757</concept_id>
<concept_desc>Theory of computation~Probabilistic computation</concept_desc>
<concept_significance>100</concept_significance>
</concept>
</ccs2012>
\end{CCSXML}

\ccsdesc[500]{Information systems~Data management systems}
\ccsdesc[300]{Mathematics of computing~Graph theory}
\ccsdesc[100]{Theory of computation~Probabilistic computation}

%
\keywords{Uncertain Graph, Reliability, Skyline Query, Edge Probability, }

%

%
\maketitle

\section{Introduction}
`Skyline' has emerged as an effective multi\mbox{-}criteria decision making operator and hence an extensively researched topic in data management community for almost two decades \cite{chomicki2013skyline}. Borzsony et al. \cite{borzsony2001skyline} fist introduced this operator. Given a set of data points $D$, the skyline operator in it returns the subset of them that are not dominated by other data points present in the dataset. For any two points $d_1$ and $d_2$, we say that $d_1$ dominates $d_2$, if with respect to each dimension $d_1$ is not worse than $d_2$, however, strictly better in at least one dimension. Without loss of generality, in this study, we assume that lower value means better in all dimensions. This problem has been studied in the context of \emph{graph data} as well \cite{zou2010dynamic, zheng2014efficient}. In real\mbox{-}world scenarios, the relationship among agents are uncertain in nature and this uncertainty is caused due to several reasons noisy measurements, unknown values, explicit manipulations, etc. Hence, this kind of situations are modeled as an uncertain graph, where edges are marked with existence probabilities. In case of social networks, these probabilities signify the influence probability between two users, in case of computer networks these signify the successful packet transfer probability between two systems etc. Now, we report some recent  literature on skyline query processing and analysis of uncertain graphs.   

\subsection{Related Work}
After introduced by Borzsony et al. \cite{borzsony2001skyline}, skyline queries have been studied on different kinds of data, for different purposes, with different system architectures, such as road networks \cite{miao2018efficiently, fu2017continuous, zhu2018authentication}, bi-criteria networks \cite{jiang2016k, ouyang2018towards}, uncertain data \cite{zhou2015adaptive, nguyen2015preference}, incomplete data \cite{lee2016optimizing, miao2016k}, streaming data \cite{de2015multicore, liu2019parallelizing}, spatial data \cite{son2017top}, encrypted data \cite{liu2018secure}, knowledge graphs \cite{keles2019skyline}, wireless sensors networks \cite{wang2016geometry}; for route recommendations \cite{yawalkar2019route,jiang2018personalized}, finding perspective customers \cite{yin2018cost}; resisting outliers \cite{jaudoin2017making}, favorite product queries \cite{zhou2016top}; with map reduce architecture \cite{park2017efficient, park2015processing}, multi-core architectures \cite{de2016continuous}, cloud computing framework \cite{huang2018efficient} and so on. Keeping the topic of this paper in our mind, here we elaborate the skyline query processing on probabilistic and uncertain data. He et al.  \cite{he2016probabilistic} studied the skyline query on uncertain time series data and developed a two step methodology for to answer this probabilistically. Park et al. \cite{park2015processing} studied the skyline query processing on uncertain data and proposed parallel algorithms for computing the same using map reduce framework. Zhou et al. \cite{zhou2015adaptive} studied the skyline query processing over uncertain data in distributed environments. Le et al.  \cite{le2016answering} studied the skyline queries on uncertain data to return the user specific relevant results without enumerating all possible worlds. Recently, there are several studies in this directions \cite{zeng2019m, liu2018parallel, liu2019parallelizing}. However, to the best of our knowledge skyline query has not been studied yet in the context of uncertain graphs.
\par Due to different practical applications, in recent times analysis of uncertain graphs have emerged as an important research topic \cite{khan2018uncertain, khan2015uncertain}. Several problems have been studied such as clustering \cite{halim2017efficient, chen2015towards}, embedding \cite{hu2017embedding}, subgraph search \cite{jin2011discovering, chen2015towards}, structural pattern findings \cite{bonchi2014core} and so an. Ke et al. \cite{ke2019budgeted} studied the `budgeted reliability maximization problem', where the goal is to add small number of edges to increase the reliability between a given pair of nodes. ke et al. \cite{ke2019depth} recently studied the $s-t$ reliability problem which asks with how much probability a target node $t$ is reachable from a source node $s$ in a given uncertain graph. Chen et al. \cite{chen2018efficient} studied the frequent pattern finding in uncertain graphs and for this problem enumeration-evaluation algorithm for this problem. Look into \cite{kassiano2016mining} for survey.
\subsection{Our Contribution}
In this paper, we propose the noble problem ``dynamic skyline query on uncertain graphs". Given an uncertain graph with a subset of vertices as query vartices, the goal of this problem is to obtain the subset of the data vertices that are not dominated by the other data vertices with respect to some distance measure from the query vertices. Particularly, we make the following contributions in this paper:
\begin{itemize}
\item We introduce the noble problem ``\emph{\textbf{Dy}namic \textbf{Sky}line Queries on Uncertain Graph Problem}" (\textbf{DySky}).
\item We propose a solution methodology for this problem, which broadly divided into three steps, namely, pruning, distance computation and skyline vertex set generation.  
\item Proposed methodology has been implemented with three benchmark datasets and results show that the pruning strategy leads to less number of candidate nodes.
\end{itemize}  
\subsection{Organization}
Rest of the paper is organized as follows: Section \ref{Sec:PPD} describes required preliminary definitions and then define the problem formally. The proposed methodology has been described in Section \ref{Sec:PM}. In Section \ref{Sec:EE} the experimental evaluations of the proposed methodology has been described. Section \ref{Sec:CFD} draws conclusions and gives future directions.
\section{Preliminaries and Problem Definition} \label{Sec:PPD}
In this section, we present required preliminary concepts and then define the \emph{dynamic skyline queries on uncertain graph} problem formally. Initially, we start with a few basic definitions.
\begin{mydef}[Uncertain Graph]
We denote an uncertain graph by $\mathcal{G}(\mathcal{V}, \mathcal{E}, \mathcal{W}, \mathcal{P})$, where $\mathcal{V}(\mathcal{G})=\{v_1, v_2, \ldots, v_n\}$ is the set of $n$ vertices, $\mathcal{E}(\mathcal{G}) \subseteq \mathcal{V}(\mathcal{G}) \times \mathcal{V}(\mathcal{G})$ is the set of $m$ edges, $\mathcal{W}$ is the distance function that assigns each edge to a positive real number, i.e., $\mathcal{W}:\mathcal{E}(\mathcal{G}) \longrightarrow \mathbb{R}^{+}$, and $\mathcal{P}$ is the existence function that assigns each edge to a probability value, i.e., $\mathcal{P}:\mathcal{E}(\mathcal{G}) \longrightarrow (0,1]$.
\end{mydef}
In our study, we consider only simple, finite, undirected, and weighted graphs. The number of nodes and edges of the graph $\mathcal{G}$ is denoted by $n$ and $m$, respectively. For an edge $e \in \mathcal{E}(\mathcal{G})$ its weight and existence probability is denoted by $\mathcal{W}(e)$ and $\mathcal{P}(e)$, respectively. In the literature, an uncertain graph is conceptualized and analyzed by the \emph{possible world model}, which we define next.
\begin{mydef}[Possible World Semantics] \label{Def:PWS}
 An uncertain graph $\mathcal{G}(\mathcal{V}, \mathcal{E}, \mathcal{W}, \mathcal{P})$ can be conceptualized as the probability distribution over a set of deterministic graphs, which is called as the possible world of the uncertain graph, and denoted as $\mathcal{L}(\mathcal{G})$. Each $G(V, E, W) \in \mathcal{L}(\mathcal{G})$ is obtained from $\mathcal{G}$ by keeping all its vertices, keeping its edges with existing probability, and if an edge of $\mathcal{G}$ is also there in $G$, then $\mathcal{W}(e)=W(e)$. Now, the probability that the deterministic garph $G$ will be generated can be computed by the Equation \ref{Eq:PWS}.
\begin{equation} \label{Eq:PWS}
\mathcal{P}_{G\sqsubseteq \mathcal{G}}= \underset{e \in E(G)}{\prod} \mathcal{P}(e) \underset{e \in E(\mathcal{G}) \setminus E(G)}{\prod} (1-\mathcal{P}(e))
\end{equation}
\end{mydef}

In any deterministic graph $G$, its two vertices $v_i$ and $v_j$ are said to be reachable if there exist a path from between $v_i$ and $v_j$. However, in case of uncertain graphs, the reachability between any two given vertices can be defined in probabilistic way, which we call \emph{reliability}.
\begin{mydef}[Reliability]
Given an undirected, uncertain graph $\mathcal{G}$, the reliability between its any two vertices $v_i$ and $v_j$ is defined as the probability that the vertices $v_i$ and $v_j$ can be reachable from each other. We denote the reliability between the vertices $v_i$ and $v_j$ by $\mathcal{R}^{\mathcal{G}}_{(v_iv_j)}$ and defined by the following equation:
\begin{equation}
\mathcal{R}^{\mathcal{G}}_{(v_iv_j)}=\underset{G \in \mathcal{L}(\mathcal{G})}{\sum} I^{G}_{(v_iv_j)}\mathcal{P}_{G\sqsubseteq \mathcal{G}}.
\end{equation}
Here, $I^{G}_{(v_iv_j)}$ is the boolean variable whose value is $1$ if $v_i$ and $v_j$ are connected in $G$ and $0$ otherwise.
\end{mydef}
In case of a deterministic weighted graph, distance between any two vertices is defined as the sum of individual edge weights constituting shortest path. However, in case of uncertain graphs distance between any two vertices can be defined in many ways. Here, we quote two of them that we use in our study. 
\begin{mydef}[Majority Distance] \cite{potamias2009nearest}
Given an uncertain graph $\mathcal{G}$ and its two vertices $v_i, v_j \in V(\mathcal{G})$, its majority distance is denoted by $dist_{md}(v_i,v_j)$ and defined as the most probable shortest path distance. Mathematically, it can be given by the following equation.
\begin{equation}
dist_{md}(v_i,v_j)=\underset{d}{argmax} \ p_{v_iv_j}(d)
\end{equation}
where $p_{v_iv_j}$ is the shortest path distribution between the vertices $v_i$ and $v_j$ that gives probability value for every distance $d$.
\begin{equation}
p_{v_iv_j}(d)=\underset{G|d_{G}(v_i,v_j)=d}{\sum} \mathcal{P}_{G\sqsubseteq \mathcal{G}}
\end{equation}
\end{mydef}

\begin{mydef}[Expected Distance] 
Given an uncertain graph $\mathcal{G}$ and its two vertices $v_i, v_j \in V(\mathcal{G})$, let $P^{l}_{(v_iv_j)}$ denotes the set of paths upto length $l$. For each path $p_k \in P^{l}_{(v_iv_j)}$, the path probability is defined as
\begin{equation}
\mathbb{P}(p_k) = \frac{\prod_{e \in p_k} \mathcal{P}(e)}{ \sum_{p_j \in P^{l}_{(v_iv_j)}} \prod_{e \in p_j} \mathcal{P}(e)}
\end{equation}

Expected distance between $v_i$ and $v_j$ is defined as the 
\begin{equation}
dist_{E}(v_i,v_j)= \underset{p_k \in P^{l}_{(v_iv_j)}}{\sum}dist(p_k). \mathbb{P}(p_k)
\end{equation}
\end{mydef}
\par For any $p \in \mathbb{Z}^{+}$, $[p]$ denotes the set $\{1, 2, \dots, p\}$. Given a set of $2$ or more dimensional data points $\mathcal{D}$, the problem of \emph{skyline query computation} asks to find out the data points that are not dominated by any other data points in $\mathcal{D}$, which is formally defined in Definition \ref{Def:5}. 
\begin{mydef}[Skyline Query] \label{Def:5}
Given a set of $p$ dimensional data points $\mathcal{D}=\{d_1, d_2, \ldots, d_{|\mathcal{D}|}\}$, we say that $d_i$ dominates $d_j$, if for all $k \in [p]$, $d_i(k) \leq d_j(k)$ and there exist atleast one $k \in [p]$ such that $d_i(k) < d_j(k)$. Skyline of the dataset $\mathcal{D}$ is the subset of the data points that are not dominated by any of the data points in $\mathcal{D}$.
\end{mydef} 
Since past one decade or so, skyline queries have been studied extensively \cite{zou2010dynamic, khan2012finding} in graphs as well, which we define next.
\begin{mydef}[Skyline Query in Graphs]
Given a graph $G(V,E)$, and a subset of vertices $\mathcal{Q}$ (called query vertices), for any two data vertices (vertices that are not query vertices, i.e., $V(G) \setminus \mathcal{Q}$) $v_i$ and $v_j$, we say $v_i$ dominates $v_j$, if $\forall w \in \mathcal{Q}$, $dist(w,v_i) \leq dist(w,v_j)$ and $\exists x \in \mathcal{S}$, such that $dist(x,v_i) < dist(x,v_j)$. The skyline query asks to return data vertices that are not dominated by other data vertices.
\end{mydef}
Though, the skyline query problem has been studied in the context of probabilistic data \cite{atallah2009computing, le2016answering, zhang2019modeling}, to the best of our knowledge this problem has not been studied in the context of uncertain graphs. In this paper, we introduce the problem of finding the dynamic skyline queries on uncertain graphs (DySky), which is defined next.
\begin{mydef}[\textbf{Dy}namic \textbf{Sky}line Queries on Uncertain Graphs] \label{Def:DySky}
Given an uncertain graph $\mathcal{G}$, and a subset of vertices $\mathcal{Q}$ (called query vertices), the problem of dynamic skyline queries on uncertain graphs asks to find out the subset of the data vertices such that none of them are dominated by the other data vertices.
\end{mydef}
Figure \ref{Fig:Example} shows a toy example of an uncertain graph with its majority distance, expected distance, and shortest path distance (for deterministic version) tables, where the skyline vertices are marked in orange color. It is important to observe as the distance measure changes, the skyline vertex set is also getting changed. This motivates us to study the DySky Problem under two different distance measures.

\begin{figure}
\centering
\begin{tabular}{c}
\includegraphics[scale=0.12]{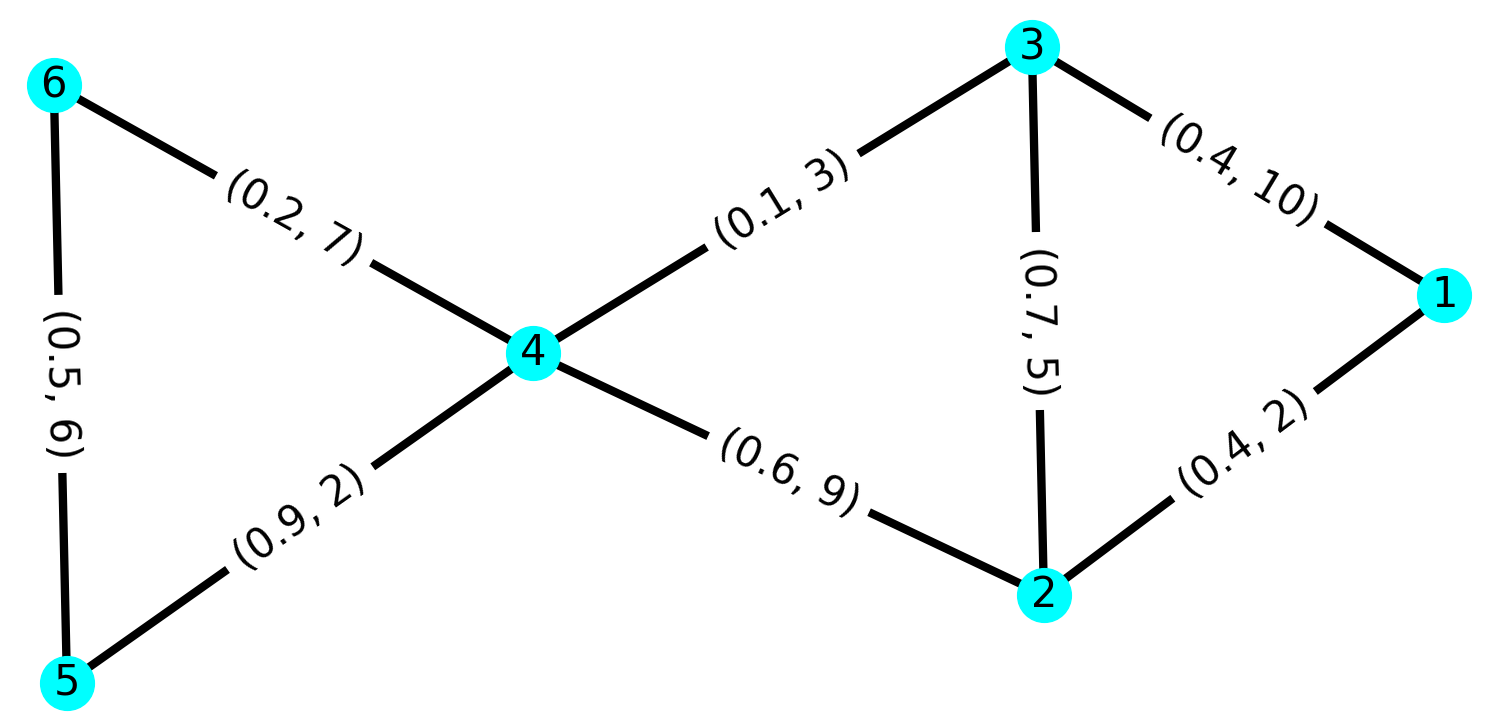}\\
(a) An Uncertain Graph \\
\includegraphics[scale=0.3]{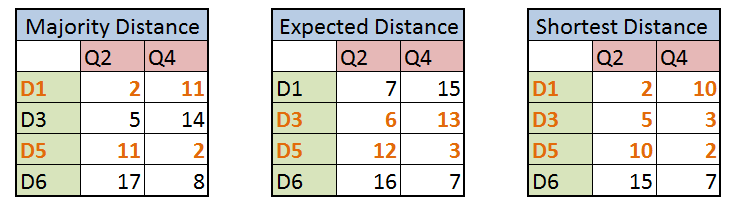}\\
(b) Majority Distance (MD), Expected Distance, \\
and Shortest Path Distance (in deterministic version)
\end{tabular}
\caption{(a) An uncertain graph with $6$ vertices and $8$ edges. The vertices $2$ and $4$ are the Query Vertices (denoted as $Q2$ and $Q4$) and remaining are data vertices (i.e., $D1$, $D3$, $D5$, and $D6$). (b) Different distance tables between the query and data vertices. Skyline vertices in each cases are marked in Orange.}
\label{Fig:Example}
\end{figure}


%


\section{Proposed Methodology} \label{Sec:PM}
Now, we describe the proposed methodology for solving the DySky Problem. Initially, we start by describing an overview of it.
\subsection{Overview}
The proposed methodology is broadly divided into three steps:
\begin{itemize}
\item \textbf{Step 1 (Pruning):} In this step, a subset of the data vertices are returned as the candidate skyline vertices. This step comprises of two subsets. First, pruning is done by performing \emph{Breadth First Search} (B.F.S., henceforth) from the query vertices and subsequently, pruning is done based on distance computation. 
\item \textbf{Step 2 (Distance Computation):} In this step, distance computation is done between the candidate skyline vertices and query vertices. As mentioned previously, in our study we have used majority distance and expected distance.   
\item \textbf{Step 3 (Skyline Vertex Set Generation):} Based on the previously computed distance, any existing skyline finding algorithm can be used to find out the actual skyline vertices. In our study, we have used the \emph{Block Nested Loop (BNL)} Algorithm proposed by Borzsonyi et al. \cite{borzsony2001skyline}.
\end{itemize}
Next, we proceed towards representing the proposed methodology in an algorithmic form and its detailed analysis. 
\subsection{The Algorithm}
Algorithm \ref{Algo:1}, \ref{Algo:2}, and \ref{Algo:3} together constitute the proposed methodology for the DySky Problem. We describe the entire procedure in two subsections. First we start with describing the pruning step.
\subsubsection{The Pruning Step}

Algorithm \ref{Algo:1} describes the B.F.S. and distance based pruning strategies, which takes the uncertain graph, the set of query vertices, and distance threshold as inputs and outputs the candidate skyline vertices. In B.F.S. pruning, from each of the query vertices, B.F.S. trees are constructed to check the connectivity. First, we create the dictionary $\mathcal{D}$. If a query vertex and data vertex is connected and the data vertex has the entry in the dictionary $\mathcal{D}$, the query vertex is included as a value corresponding to this key. Otherwise, a key corresponding to the data vertex is created and the query vertex is added as a value to this `key'. Now, the data vertices that are reachable from all the query vertices are kept as the candidate skyline vertices. Here, the B.F.S. pruning ends.
\par In reality, even if two vertices are connected by a path of large distance (i.e., more than certain threshold), reachability becomes costlier. Hence, to eliminate such vertices, we perform the distance\mbox{-}based pruning. For this purpose, distance between candidate skyline vertex and query vertex is computed. For a candidate skyline vertex, if there exist atleast one query vertex for which the computed distance value is more than the user defined threshold, the candidate skyline vertex set is updated by removing the candidate skyline vertex.

\begin{algorithm}
\SetAlgoLined
\KwData{Uncertain Graph $\mathcal{G}(\mathcal{V}, \mathcal{E}, \mathcal{W}, \mathcal{P})$, The Set of Query Vertices $\mathcal{Q} \subseteq \mathcal{V}(\mathcal{G})$, Distance Threshold $T$}.
\KwResult{Candidate Skyline Vertices $\mathcal{CS} \subseteq \mathcal{V}(\mathcal{G}) \setminus \mathcal{Q}$ }
$\text{Create Dictionary } \mathcal{D}$\;
\For{$\text{All } u \in \mathcal{Q}$}{
   \For{$\text{All } v \in V(\mathcal{G}) \setminus \mathcal{Q}$}{
         \If{$\texttt{Isconnected(uv)}$}{
             \eIf{$v \in \mathcal{D}.Keys()$}{
             $\mathcal{D}[v].values()= \mathcal{D}[v].values() \cup \{u\}$
          }
          {
          $\mathcal{D}.Create\_Key(v)$\;
          $\mathcal{D}[v].Add\_Value(u)$
          }
          }
         }
}
$\mathcal{CS}=\emptyset$\;
\For{$\text{All } u \in \mathcal{D}.Keys()$}{
\If{$\mathcal{D}[u].Values()=\mathcal{Q}$}{
    $\mathcal{CS}=\mathcal{CS} \cup \{u\}$\;
}
}
\For{$\text{All } v \in \mathcal{CS}$}{
   \For{$\text{All } u \in \mathcal{Q}$}{
   \If{$distance(uv) > T$}{
   $\mathcal{CS}=\mathcal{CS} \setminus \{v\}$\;
   }
   
   }
   }
 \caption{Step 1 (B.F.S and Distance based pruning)}
 \label{Algo:1}
\end{algorithm}
\par Any pruning strategy to work correctly should guarantee that it does not remove any skyline vertex. Hence, we show that the Algorithm \ref{Algo:1} is a correct pruning strategy in Lemma \ref{Lemma:1}.

\begin{mylemma} \label{Lemma:1}
The proposed pruning strategy (Algorithm \ref{Algo:1}) is correct.
\end{mylemma}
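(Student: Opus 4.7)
The plan is to prove correctness by contradiction: assume there is some vertex $v$ that belongs to the true skyline of the DySky instance but is discarded by Algorithm \ref{Algo:1}, and derive a contradiction. Since the pruning proceeds in two phases (the B.F.S.\ connectivity filter, and then the distance-threshold filter), $v$ must have been eliminated in exactly one of them, and I would split the argument into two cases.

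For the B.F.S.\ phase, a data vertex $v$ is discarded iff there exists some $u \in \mathcal{Q}$ such that no path connects $u$ and $v$ in the underlying graph of $\mathcal{G}$. By the standard convention used throughout graph-skyline literature (e.g., Zou et al.\ \cite{zou2010dynamic}), the distance between disconnected vertices is $\infty$. I would first make this convention explicit, then observe that for any other candidate $v' \in \mathcal{CS}$ that survives the B.F.S.\ filter, we have $dist(u, v') < \infty = dist(u, v)$, giving the required strict inequality in the dominance definition. The remaining inequalities $dist(w, v') \leq dist(w, v)$ for $w \in \mathcal{Q} \setminus \{u\}$ hold trivially under the same $\infty$-convention whenever $v$ is also unreachable from $w$, and when $v$ is reachable from $w$ one appeals to the implicit assumption in Definition \ref{Def:DySky} that skyline vertices are taken over data vertices with finite distance to every query vertex (otherwise the dominance order is only partially defined). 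Hence $v$ could not have been a skyline vertex in the first place, contradicting the assumption.

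For the distance-threshold phase, $v$ is discarded iff there exists $u \in \mathcal{Q}$ with $distance(u,v) > T$. Here the argument is that $T$ is a user-specified parameter that augments the input instance: the problem being solved is the skyline restricted to data vertices satisfying $distance(u, \cdot) \leq T$ for every $u \in \mathcal{Q}$. Under this reading, any $v$ violating the threshold for some $u$ is outside the feasible region and hence cannot be a skyline vertex of the thresholded instance. I would state this explicitly as part of the lemma's hypothesis, so that correctness is with respect to the DySky problem \emph{parameterized by $T$}.

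The main obstacle, and the point where care is required, is not the combinatorial argument but the bookkeeping around what counts as a skyline vertex when distances are $\infty$ or exceed $T$. Without a clean convention, the B.F.S.\ pruning is not unconditionally safe: a vertex $v$ unreachable from one query vertex $u$ but very close to all other query vertices is, under a na\"ive reading of Definition \ref{Def:5}, incomparable to (and hence not dominated by) some candidates in $\mathcal{CS}$. I would therefore preface the proof by formalizing the two conventions (\textbf{(i)} $\infty$-distance vertices are excluded from the skyline, and \textbf{(ii)} the thresholded DySky problem restricts feasibility to within-$T$ vertices), after which the two cases above dispatch cleanly and the lemma follows.
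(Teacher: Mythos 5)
Your proposal is correct in substance, but it takes a genuinely different route from the paper for the simple reason that the paper offers no argument at all: its proof of Lemma~\ref{Lemma:1} is the single line ``Follows from the description.'' Your two-case decomposition (B.F.S.\ filter, then threshold filter) is the natural formalization of what that sentence leaves implicit, and, more importantly, you correctly identify that the lemma is \emph{not literally true} under Definition~\ref{Def:DySky} as written: a data vertex unreachable from one query vertex but close to the others need not be dominated by any surviving candidate, and a vertex at distance greater than $T$ from some query vertex is nowhere excluded by the problem definition, which never mentions $T$. Your fix --- making explicit the conventions that (i) vertices at infinite distance from some query vertex are outside the skyline and (ii) correctness is with respect to the $T$-thresholded instance --- is exactly what is needed to make the lemma provable, and it is the honest version of what the paper silently assumes. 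One small streamlining: once convention (i) is adopted, the domination argument in your first case ($dist(u,v') < \infty = dist(u,v)$ plus the problematic remaining inequalities) becomes unnecessary; a discarded vertex is excluded from the skyline by definition, so both cases reduce to ``pruned implies infeasible,'' mirroring your cleaner treatment of the threshold case. What your approach buys over the paper's is a precise statement of the hypotheses under which the pruning is safe; what it costs is the acknowledgment that correctness holds only for a restricted variant of the problem, a caveat the paper does not surface.
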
 
\begin{proof}
Follows from the description.
\end{proof}
\par Now, we do an analysis for time and space requirement of Algorithm \ref{Algo:1}. Let $q$ be the number of query vertices, i.e., $|\mathcal{Q}|=q$. For creating the B.F.S. trees rooted at the query vertices requires $\mathcal{O}(q(m+n))$ time. The maximum number of values associated with a `key' in the dictionary $\mathcal{D}$ is of $\mathcal{O}(q)$. Execution time from Line No. $3$ to $14$  and $16$ to $20$ requires $\mathcal{O}(q(n-q)^{2})$ and $\mathcal{O}((n-q)q^{2})$. Now, in distance\mbox{-}based pruning, the number of distance computations is $\mathcal{O}(q(n-q))$. Computing shortest path between  two vertices in a weighted graph with positive edge weights requires $\mathcal{O}(m + n \log n)$ time. Hence, time requirement for distance\mbox{-}based pruning requires $\mathcal{O}(q(n-q)(m + n \log n))$ time. Total time requirement for Algorithm \ref{Algo:1} is of $\mathcal{O}(q(m+n)+nq(n-q)+q(n-q)(m + n \log n))=\mathcal{O}(q(n-q)(m+ n \log n))$. Extra space requirement of Algorithm \ref{Algo:1} is to store the dictionary $\mathcal{D}$, which is of $\mathcal{O}(q(n-q))$, to store the candidate skyline vertices, which is of $\mathcal{O}(n-q)$, and to perform the B.F.S., which is of $\mathcal{O}(n)$. Hence, total space requirement of Algorithm \ref{Algo:1} is of $\mathcal{O}(q(n-q))$. Lemma \ref{Lemma:2} describes the formal statement.
\begin{mylemma} \label{Lemma:2}
Time and space requirement of Algorithm \ref{Algo:1} is of $\mathcal{O}(q(n-q)(m+ n \log n))$ and $\mathcal{O}(q(n-q))$, respectively.
\end{mylemma}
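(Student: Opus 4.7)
The plan is to obtain both bounds by a phase-by-phase cost accounting of Algorithm~\ref{Algo:1}. I would partition the pseudocode into four natural phases: (i) initializing $\mathcal{D}$ and running BFS from each query vertex to populate it (lines~1--14), (ii) scanning $\mathcal{D}$ to collect data vertices reachable from every query vertex into $\mathcal{CS}$ (lines~15--20), (iii) the distance-threshold pruning that calls a shortest-path subroutine on each (candidate, query) pair (lines~21--26), and the underlying BFS primitive whose cost I would invoke as a standard result. Each phase contributes a time cost and a peak-memory cost in terms of $q=|\mathcal{Q}|$, $n=|\mathcal{V}(\mathcal{G})|$, and $m=|\mathcal{E}(\mathcal{G})|$, which I then sum.

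For the time bound, first I would charge the $q$ BFS traversals at $\mathcal{O}(q(m+n))$ using the textbook $\mathcal{O}(m+n)$ cost of a single BFS. The bookkeeping in the nested loops either extends an existing key or creates a new one, performing at most $q(n-q)$ dictionary operations at constant amortized cost each, for a subtotal of $\mathcal{O}(q(n-q))$. The candidate-extraction loop compares, for each of at most $n-q$ keys, a value set of size at most $q$ against $\mathcal{Q}$, contributing $\mathcal{O}(q(n-q))$. The distance-pruning phase performs at most $q(n-q)$ shortest-path computations on a positively weighted graph; invoking Dijkstra's algorithm with a binary heap at $\mathcal{O}(m + n\log n)$ per call gives $\mathcal{O}(q(n-q)(m + n\log n))$. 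This last term dominates all previous ones, and summation leaves the stated bound intact.

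For the space bound I would account for three live structures. The dictionary $\mathcal{D}$ has at most $n-q$ keys, each paired with a value set of size at most $q$, so $\mathcal{O}(q(n-q))$. The candidate set $\mathcal{CS}$ uses $\mathcal{O}(n-q)$. The BFS auxiliary structures (queue and visited flags) use $\mathcal{O}(n)$, and the shortest-path routine reuses $\mathcal{O}(n)$ scratch space for its priority queue and distance array. All of these are absorbed by the $\mathcal{O}(q(n-q))$ dictionary term (noting $q\ge 1$), yielding the claimed total.

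The main obstacle I anticipate is not the arithmetic but the interpretation of the $distance$ call on line~23: the $\mathcal{O}(m + n\log n)$ figure is the cost of single-source shortest paths on a deterministic weighted graph, whereas the paper later evaluates two probabilistic distance notions (majority and expected). I would resolve this by stipulating explicitly that the pruning step uses the underlying weighted graph $(\mathcal{V},\mathcal{E},\mathcal{W})$ with the probabilities ignored, so Dijkstra applies directly; if instead the uncertain distances themselves are intended, a separate cost bound for their evaluation would need to be substituted in, altering the final expression accordingly.
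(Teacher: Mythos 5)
Your proposal is correct and follows essentially the same phase-by-phase accounting as the paper: $\mathcal{O}(q(m+n))$ for the BFS traversals, lower-order terms for the dictionary bookkeeping and candidate extraction, a dominant $\mathcal{O}(q(n-q)(m+n\log n))$ for the Dijkstra-based distance pruning, and a space bound dominated by the $\mathcal{O}(q(n-q))$ dictionary. Your explicit stipulation that the pruning distances are computed on the deterministic weighted graph (ignoring edge probabilities) is a clarification the paper leaves implicit but clearly intends.
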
 

\subsubsection{Distance Computation and Skyline Vertex Set Generation}
Now, we describe Step $2$ and $3$ of our proposed methodology. It is important to observe that depending upon which distance measure is used (i.e., majority distance or expected distance) Step $2$ will be different. Algorithm \ref{Algo:2} and \ref{Algo:3} describes the last two steps for the majority distance and expected distance, respectively.
\begin{algorithm}
\SetAlgoLined
\KwData{Candidate Skyline Vertices $\mathcal{CS}$ }.
\KwResult{The Skyline Vertex Set $\mathcal{S}$. }
$\text{Generate }|\mathcal{R}| \text{ number of samples graphs}$ \; 
$	\text{Store the graph probabilities in } \mathcal{P}_G[1 \dots |\mathcal{R}|]$\;
$\text{Create\_Matrix }\mathcal{M} \in \mathbb{R}^{|\mathcal{CS}| \times |\mathcal{Q}|}$\;
\For{$\text{All } u \in \mathcal{CS}$}{
\For{$\text{All } v \in \mathcal{Q}$}{
	Create dictionary $Temp$\;
\For{$\text{All } r \in \mathcal{R}$}{
 $d =\text{shortest distance betweeen } u \text{ and }v \text{ in }r$\;
$ Temp[d] = Temp[d] +  \mathcal{P}_G[r]$\;
}
$\mathcal{M}[u][v] = \underset{d}{argmax} \ Temp[d]$\;
}

}
$\mathcal{S} = \text{Apply BNL on } \mathcal{M}$\;
return $\mathcal{S}$\;
\caption{Step $2$ and $3$ (Distance Computation and Skyline Vertex Set Generation) for Majority Distance}
 \label{Algo:2}
\end{algorithm}
\par For the majority distance case, first we generate $|\mathcal{R}|$ number of subgraphs as mentioned in Definition \ref{Def:PWS}, and the corresponding generation probabilities are stored in the array $\mathcal{P}_G$. Next, the majority distance is computed between a candidate skyline vertex and a query vertex. Finally, the BNL Algorithm is applied on the distance matrix $\mathcal{M}$ to obtain the skyline vertex set.
\par Now, we analyze Algorithm \ref{Algo:2} for time and space requirement. As mentioned in Definition \ref{Def:PWS}, generation of $|\mathcal{R}|$ number of subgraphs require $\mathcal{O}(m |\mathcal{R}|)$ time. Using Dijkstra's algorithm computing the shortest path between a pair of vertices requires $\mathcal{O}(m + n \log n)$ time. Hence, execution time from Line $5$ to $14$ requires $\mathcal{O}(q(n-q)|\mathcal{R}|(m + n \log n))$. Now, BNL algorithm requires $\mathcal{O}((n-q)^{2})$ time. Extra space consumed by Algorithm \ref{Algo:2} is to store the array $\mathcal{P}_G$, $Temp$, and the matrix $\mathcal{M}$ which requires $\mathcal{O}(|\mathcal{R}|)$, $\mathcal{O}(|\mathcal{R}|)$, and $\mathcal{O}(q(n-q))$ space, respectively. The formal statement is mentioned in Lemma \ref{Lemma:3}.

\begin{mylemma} \label{Lemma:3}
Time and space requirement of Algorithm \ref{Algo:2} is of $\mathcal{O}(q(n-q)|\mathcal{R}|(m+ n \log n) + (n-q)^{2})$ and $\mathcal{O}(q(n-q) + |\mathcal{R}|)$, respectively.
\end{mylemma}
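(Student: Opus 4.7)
The plan is to prove Lemma \ref{Lemma:3} by a straightforward line-by-line accounting of the time and space consumed by each segment of Algorithm \ref{Algo:2}, then combining these into a single asymptotic bound. Since the argument mirrors the informal analysis already given in the paragraph preceding the lemma, the work is to organize it carefully rather than to introduce any new technique.

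For the time bound, I would decompose the algorithm into three phases. First, Lines 1--2 draw $|\mathcal{R}|$ possible worlds from $\mathcal{G}$ and record their generation probabilities; by the product form in Equation \ref{Eq:PWS}, each sample is produced in $\mathcal{O}(m)$ time, giving $\mathcal{O}(m|\mathcal{R}|)$ overall. Second, Lines 4--14 constitute the core nested loop: for every pair $(u,v) \in \mathcal{CS} \times \mathcal{Q}$ (of which there are $\mathcal{O}(q(n-q))$) and for every sampled graph $r \in \mathcal{R}$, the algorithm runs a single-pair shortest-path computation. Invoking Dijkstra's algorithm with a Fibonacci heap on the deterministic sample costs $\mathcal{O}(m + n\log n)$, while updating and querying $Temp$ contributes only $\mathcal{O}(|\mathcal{R}|)$ across all samples of a fixed $(u,v)$. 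Hence this block costs $\mathcal{O}(q(n-q)|\mathcal{R}|(m + n\log n))$. Third, applying BNL to the $|\mathcal{CS}| \times |\mathcal{Q}|$ matrix $\mathcal{M}$ costs $\mathcal{O}((n-q)^{2})$ in the worst case (since $|\mathcal{CS}| \leq n-q$ dominates $q$ in this product-style comparison). Summing these three contributions and absorbing the $\mathcal{O}(m|\mathcal{R}|)$ sample-generation cost into the larger $\mathcal{O}(q(n-q)|\mathcal{R}|(m + n\log n))$ term yields the claimed $\mathcal{O}(q(n-q)|\mathcal{R}|(m+n\log n) + (n-q)^{2})$.

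For the space bound, I would separately account for the three auxiliary structures explicitly allocated by the algorithm. The array $\mathcal{P}_G$ holds one probability per sample, using $\mathcal{O}(|\mathcal{R}|)$ space. The dictionary $Temp$ is re-initialized inside each iteration of the $(u,v)$ loop and stores at most one entry per sample, so its peak footprint is also $\mathcal{O}(|\mathcal{R}|)$. The output matrix $\mathcal{M}$ has $|\mathcal{CS}| \cdot |\mathcal{Q}| = \mathcal{O}(q(n-q))$ cells. Working storage for Dijkstra on a single sample adds only $\mathcal{O}(n+m)$, which is dominated by the input size and hence can be charged to the graph representation rather than to extra space. Taking the maximum (equivalently, the sum) of the dominant contributions gives $\mathcal{O}(q(n-q) + |\mathcal{R}|)$, matching the statement.

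The main obstacle I anticipate is not mathematical but bookkeeping: making sure no hidden overhead in the inner loop inflates the per-sample cost beyond $\mathcal{O}(m + n\log n)$, and justifying that the $\mathcal{O}(m|\mathcal{R}|)$ sampling cost and the $\mathcal{O}((n-q)^{2})$ BNL cost are either absorbed into or listed alongside the dominant nested-loop term. If either of these is not in fact dominated for some corner regime of the parameters, the stated bound still holds by the way the theorem is written as a sum of the two surviving terms; hence the proof reduces to carefully verifying this partition.
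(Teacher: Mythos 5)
Your proposal is correct and follows essentially the same route as the paper: sample generation in $\mathcal{O}(m|\mathcal{R}|)$, the nested loop over $\mathcal{CS}\times\mathcal{Q}\times\mathcal{R}$ with a Dijkstra call per sample giving $\mathcal{O}(q(n-q)|\mathcal{R}|(m+n\log n))$, BNL adding $\mathcal{O}((n-q)^{2})$, and space accounted for by $\mathcal{P}_G$, $Temp$, and $\mathcal{M}$. Your bookkeeping of where the $\mathcal{O}(m|\mathcal{R}|)$ term is absorbed is slightly more explicit than the paper's, but the argument is the same.
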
 
Lemma \ref{Lemma:2} and \ref{Lemma:3} together imply the statement mentioned in   Theorem \ref{Th:1}.

\begin{mythm} \label{Th:1}
If majority distance is concerned, the proposed methodology returns the skyline vertex set in $\mathcal{O}(q(n-q)|\mathcal{R}|(m+ n \log n) + (n-q)^{2})$ time and $\mathcal{O}(q(n-q) + |\mathcal{R}|)$ space.
\end{mythm}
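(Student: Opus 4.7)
The plan is to combine Lemma \ref{Lemma:2} and Lemma \ref{Lemma:3} in a direct additive fashion, since the proposed methodology for the majority distance case is precisely the sequential composition of Algorithm \ref{Algo:1} (pruning) followed by Algorithm \ref{Algo:2} (distance computation and skyline generation). First I would argue correctness: Lemma \ref{Lemma:1} guarantees that the pruning step does not discard any actual skyline vertex, so the candidate set $\mathcal{CS}$ supplied to Algorithm \ref{Algo:2} is a superset of the true skyline; the BNL step then faithfully extracts the skyline from the distance matrix $\mathcal{M}$ restricted to these candidates, so the overall output is correct.

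Next, for the running time I would add the two bounds from Lemma \ref{Lemma:2} and Lemma \ref{Lemma:3}, giving
\begin{equation*}
\mathcal{O}\bigl(q(n-q)(m+ n \log n)\bigr) + \mathcal{O}\bigl(q(n-q)|\mathcal{R}|(m+ n \log n) + (n-q)^{2}\bigr).
\end{equation*}
I would then observe that since $|\mathcal{R}| \geq 1$, the first summand is absorbed into the $q(n-q)|\mathcal{R}|(m+n\log n)$ term of the second summand, so the total simplifies to $\mathcal{O}(q(n-q)|\mathcal{R}|(m+ n \log n) + (n-q)^{2})$, matching the theorem statement.

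For the space bound I would do the same additive combination: Lemma \ref{Lemma:2} contributes $\mathcal{O}(q(n-q))$ and Lemma \ref{Lemma:3} contributes $\mathcal{O}(q(n-q) + |\mathcal{R}|)$. Since the pruning step's auxiliary storage (the dictionary $\mathcal{D}$, the BFS frontier, and the candidate set $\mathcal{CS}$) can be released before Algorithm \ref{Algo:2} allocates the sample-probability array $\mathcal{P}_G$, the dictionary $Temp$, and the matrix $\mathcal{M}$, the peak space usage is the maximum, not the sum, of the two bounds; this maximum is $\mathcal{O}(q(n-q) + |\mathcal{R}|)$ as claimed.

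The only mildly subtle step is the absorption argument for the running time and the reuse-of-space argument for the memory bound; neither is really an obstacle, since both reduce to elementary dominance inequalities between the terms. No genuinely new estimates are required beyond what Lemmas \ref{Lemma:2} and \ref{Lemma:3} already provide, so the proof will be a short two-line composition argument.
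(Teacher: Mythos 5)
Your proposal is correct and matches the paper's own (one-line) justification, which simply states that Lemma \ref{Lemma:2} and Lemma \ref{Lemma:3} together imply Theorem \ref{Th:1}; your absorption argument for the time bound (using $|\mathcal{R}|\geq 1$) and the space combination are exactly the intended composition, spelled out in more detail. Note that for the space bound even a plain sum of the two lemmas' contributions already gives $\mathcal{O}(q(n-q)+|\mathcal{R}|)$, so the reuse-of-memory argument, while valid, is not needed.
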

It is trivial to observe that Algorithm \ref{Algo:3} just implements the expected distance, and hence, without explanation we move to analyze the algorithm. Assume that maximum degree of the input uncertain graph is $d_{max}$. Hence, the maximum number paths upto length $l$ between any pair of vertices is of $\mathcal{O}(d_{max}^{l})$. Hence, running time from Line $3$ to $13$ is of $\mathcal{O}(q(n-q)ld_{max}^{l})$. Hence, total running time of Algorithm \ref{Algo:3} is of $\mathcal{O}(q(n-q)ld_{max}^{l}+ (n-q)^{2})$. Extra space consumed by the Algorithm \ref{Algo:3} is to store the matrix $\mathcal{M}$, array $Path$, $Prob$ and $dist$ which requires $\mathcal{O}(q(n-q)+ld_{max}^{l})$. Hence, Lemma \ref{Lemma:4} holds.
\begin{mylemma} \label{Lemma:4}
The running time and space requirement of Algorithm \ref{Algo:3} is of $\mathcal{O}(q(n-q)ld_{max}^{l}+ (n-q)^{2})$ and $\mathcal{O}(q(n-q)+ld_{max}^{l})$, respectively.
\end{mylemma}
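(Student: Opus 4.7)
The plan is to bound both the running time and the auxiliary space of Algorithm~\ref{Algo:3} by walking line by line through its two nested outer loops (over $u \in \mathcal{CS}$ and $v \in \mathcal{Q}$), and then adding the cost of the final BNL call. Since the preamble already observes that Algorithm~\ref{Algo:3} is a direct transcription of the expected-distance formula, the crux of the proof is to bound the per-pair cost of computing $dist_{E}(u,v)$.

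First I would argue the per-pair cost. For a fixed source $v$, the set $P^{l}_{(v_iv_j)}$ of paths of length at most $l$ can be enumerated by a depth-bounded search of depth $l$ starting at $v$. Since every internal vertex has at most $d_{max}$ neighbors, the enumeration tree contains at most $\sum_{i=0}^{l} d_{max}^{i} = \mathcal{O}(d_{max}^{l})$ candidate paths, and each path requires $\mathcal{O}(l)$ work to compute both $\prod_{e \in p_k}\mathcal{P}(e)$ and $dist(p_k)$. Thus processing one $(u,v)$ pair costs $\mathcal{O}(l\,d_{max}^{l})$; multiplying by the $q(n-q)$ such pairs produced by the outer loops yields the $\mathcal{O}(q(n-q)\,l\,d_{max}^{l})$ contribution for Lines~3--13.

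Next I would account for the skyline extraction step. Once the $|\mathcal{CS}| \times |\mathcal{Q}|$ matrix $\mathcal{M}$ is filled, BNL compares every ordered pair of candidate vertices coordinate-wise, which (using $|\mathcal{CS}| \le n-q$) contributes $\mathcal{O}((n-q)^{2})$; summing with the bound above gives the claimed time complexity. For space, the matrix $\mathcal{M}$ occupies $\mathcal{O}(q(n-q))$ entries, while the arrays $Path$, $Prob$ and $dist$ only need to hold information for the single $(u,v)$ pair currently being processed, and therefore require $\mathcal{O}(l\,d_{max}^{l})$ cells each. Adding these two contributions gives the advertised $\mathcal{O}(q(n-q) + l\,d_{max}^{l})$ space bound.

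The main obstacle I expect is justifying the geometric-series bound $\mathcal{O}(d_{max}^{l})$ on $|P^{l}_{(v_iv_j)}|$: if paths are not forced to be simple the bound must be restated carefully, and the normalization denominator $\sum_{p_j \in P^{l}_{(v_iv_j)}} \prod_{e \in p_j} \mathcal{P}(e)$ appearing in $\mathbb{P}(p_k)$ must be accumulated on the fly so that we do not pay for a second pass through the path set. Both points are handled by observing that Algorithm~\ref{Algo:3} maintains numerator and denominator simultaneously while the depth-bounded search is unrolled, so no factor beyond $\mathcal{O}(l\,d_{max}^{l})$ per pair is incurred and the stated bounds follow.
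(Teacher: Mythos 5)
Your argument matches the paper's own analysis essentially step for step: bound the number of length-$\le l$ paths per pair by $\mathcal{O}(d_{max}^{l})$ with $\mathcal{O}(l)$ work per path, multiply by the $q(n-q)$ pairs, add $\mathcal{O}((n-q)^{2})$ for BNL, and charge space to $\mathcal{M}$ plus the per-pair path arrays. In fact you are more careful than the paper (which simply asserts these bounds), particularly in flagging the geometric-series count and the on-the-fly accumulation of the normalization denominator.
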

Lemma \ref{Lemma:2} and \ref{Lemma:4} together imply the statement mentioned in  Theorem \ref{Th:2}.
\begin{mythm} \label{Th:2}
If expected distance is concerned, the proposed methodology returns the skyline vertex set in $\mathcal{O}(q(n-q)(m+ n \log n +ld_{max}^{l}) + (n-q)^{2})$ time and $\mathcal{O}(q(n-q) + |\mathcal{R}|)$ space.
\end{mythm}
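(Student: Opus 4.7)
The plan is to prove Theorem \ref{Th:2} simply by composing the complexity bounds of the two stages of the proposed methodology when the expected-distance variant is used: Step 1 (Algorithm \ref{Algo:1}, the B.F.S.\ and distance-based pruning) followed by Steps 2 and 3 combined (Algorithm \ref{Algo:3}, expected-distance computation followed by a BNL invocation). The correctness argument for returning exactly the skyline vertex set is already covered: Lemma \ref{Lemma:1} ensures Algorithm \ref{Algo:1} does not discard any skyline vertex, and Algorithm \ref{Algo:3} populates the distance matrix $\mathcal{M}$ for all surviving candidates before invoking BNL, which is a known exact skyline algorithm. So the whole argument reduces to bookkeeping over running time and space.

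For running time, I would start from Lemma \ref{Lemma:2}, which gives a bound of $\mathcal{O}(q(n-q)(m+n\log n))$ for Step 1, and then add the bound from Lemma \ref{Lemma:4}, which gives $\mathcal{O}(q(n-q)\, l\, d_{max}^{l}+(n-q)^{2})$ for Steps 2 and 3. Summing and factoring the common $q(n-q)$ yields
\begin{equation*}
\mathcal{O}\bigl(q(n-q)(m+n\log n) + q(n-q)\, l\, d_{max}^{l} + (n-q)^{2}\bigr) = \mathcal{O}\bigl(q(n-q)(m+n\log n + l\, d_{max}^{l}) + (n-q)^{2}\bigr),
\end{equation*}
which is exactly the time bound claimed. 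For space, I would combine the $\mathcal{O}(q(n-q))$ space used by Algorithm \ref{Algo:1} (dictionary and candidate set) with the $\mathcal{O}(q(n-q)+l\, d_{max}^{l})$ space used by Algorithm \ref{Algo:3} (matrix $\mathcal{M}$ plus the temporary $Path$, $Prob$, $dist$ arrays), noting that the two stages can be analyzed additively; the final expression absorbs the dominant term into the stated $\mathcal{O}(q(n-q)+|\mathcal{R}|)$ bound (with $|\mathcal{R}|$ playing the role of the auxiliary storage term carried through the methodology).

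The proof is essentially a one-line sum of two previously established lemmas, so there is no real obstacle; the only subtle point worth spelling out is the factoring of $q(n-q)$ across the $(m+n\log n)$ and $l\, d_{max}^{l}$ terms, which is purely arithmetic. I would close by remarking that the $(n-q)^{2}$ summand is inherited from BNL and cannot be absorbed into the $q(n-q)(\cdot)$ term in general, justifying why it must appear as an additive remainder in the final statement.
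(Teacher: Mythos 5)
Your proof takes exactly the paper's route: Theorem~\ref{Th:2} is obtained by adding the bounds of Lemma~\ref{Lemma:2} and Lemma~\ref{Lemma:4} and factoring $q(n-q)$, which is all the paper does. Your remark about the space bound is well taken --- the $|\mathcal{R}|$ term in the theorem's stated space complexity does not follow from Lemma~\ref{Lemma:4} (which gives $\mathcal{O}(q(n-q)+l\,d_{max}^{l})$ and involves no sampling), so that term appears to be carried over from Theorem~\ref{Th:1} rather than derived.
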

 \begin{algorithm}
	\SetAlgoLined
	\KwData{Candidate Skyline Vertices $\mathcal{CS}$ }.
	\KwResult{The Skyline Vertex Set $\mathcal{S}$. }
	$\text{Create\_Matrix }\mathcal{M} \in \mathbb{R}^{|\mathcal{CS}| \times |\mathcal{Q}|}$\;
	\For{$\text{All } u \in \mathcal{CS}$}{
		\For{$\text{All } v \in \mathcal{Q}$}{
			$path$ = Compute all paths from $u$ to $q$ upto length $l$\;
			$prob[1 \dots |path|] = 0$ ; $dist[1 \dots |path|] = 0$\;
			\For{$\text{All } t \in path$}{
				\For{$\text{All } e \in \mathcal{E}(t)$}{
					$prob[t] = prob[t] + \mathcal{P}(e)$\;
					$dist[t] = dist[t] + \mathcal{P}(e) * \mathcal{W}(e) $\;
				}
			}
			$\mathcal{M}[u][v] = {\sum dist}/{\sum prob}$\;
		}
	}
	$\mathcal{S} = \text{Apply BNL on } \mathcal{M}$\;
	return $\mathcal{S}$\;
	\caption{Step $2$ and $3$ (Distance Computation and Skyline Vertex Set Generation) for Expected Distance}
	\label{Algo:3}
\end{algorithm}

\section{Experimental Evaluations} \label{Sec:EE}

\begin{figure*}
\centering
\begin{tabular}{ccc}
\includegraphics[scale=0.15]{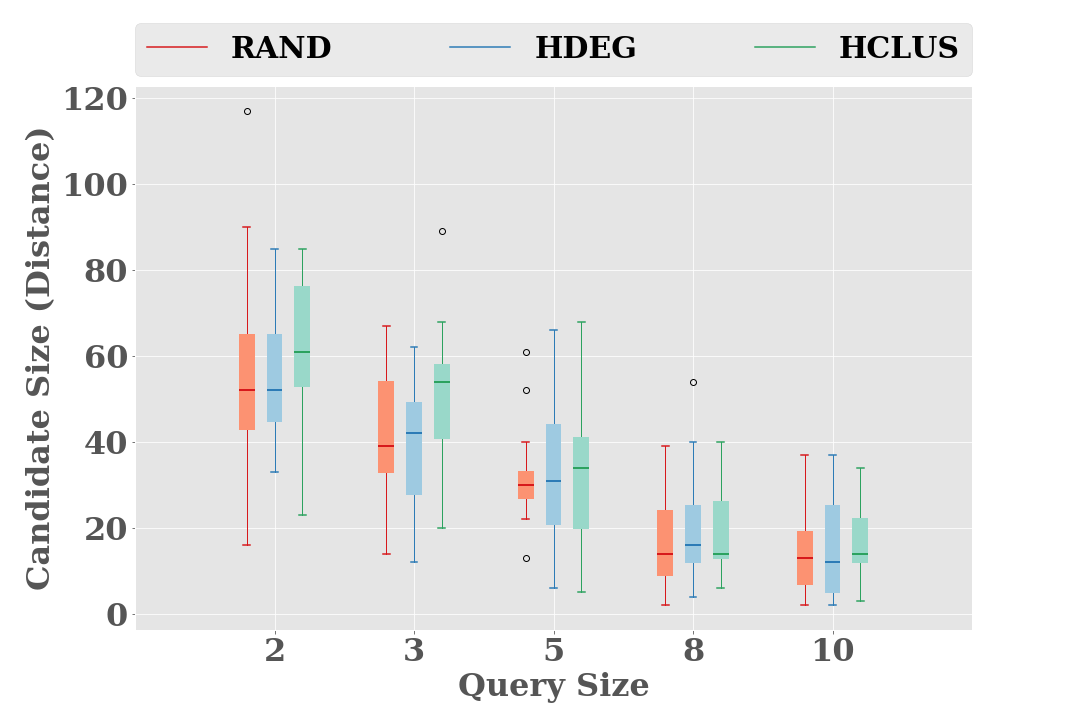} & \includegraphics[scale=0.15]{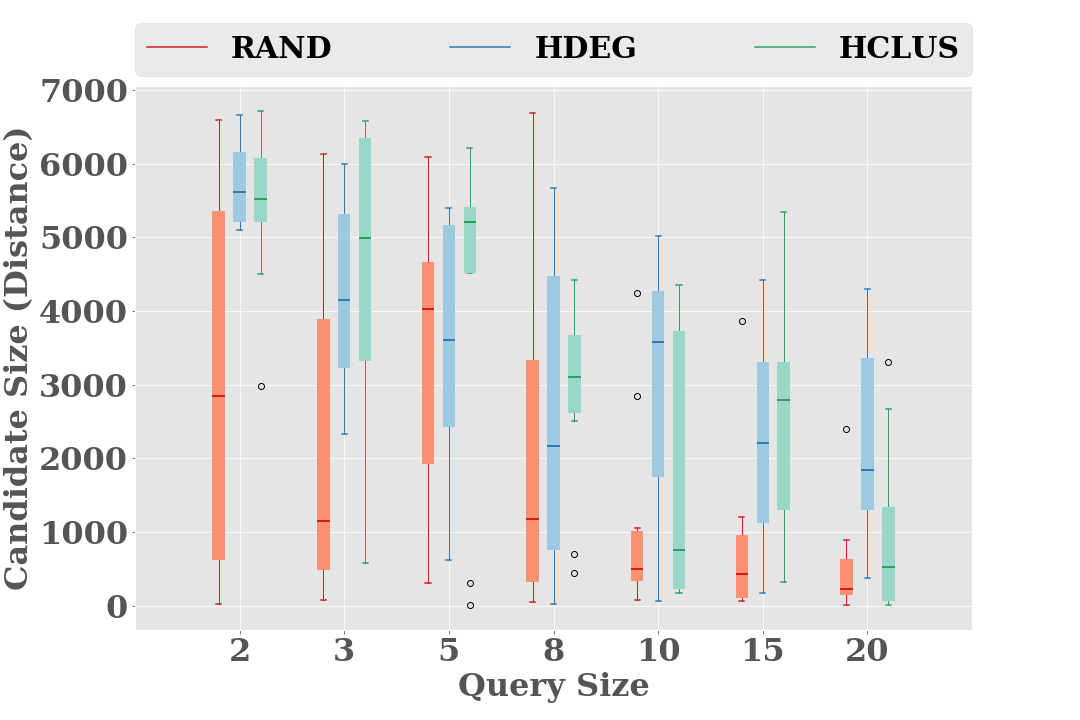} & \includegraphics[scale=0.15]{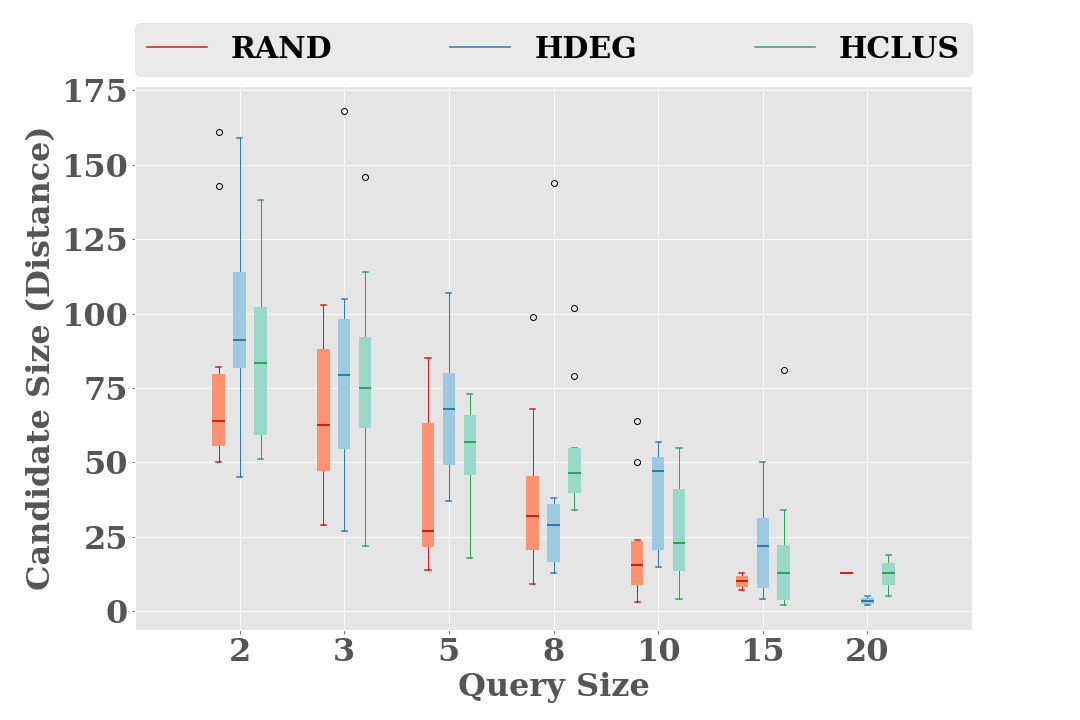}\\
(a) Minnasota Road Network  & (b) P2P Network & (c) USA Road Network \\
\end{tabular}
\caption{Box plot for the candidate skyline size with respect to the query size for the Minnasota Road Network, P2P Network, and USA Road Network datasets.}
\label{Fig:1}
\end{figure*}

\begin{figure*}
\centering
\begin{tabular}{ccc}
\includegraphics[scale=0.15]{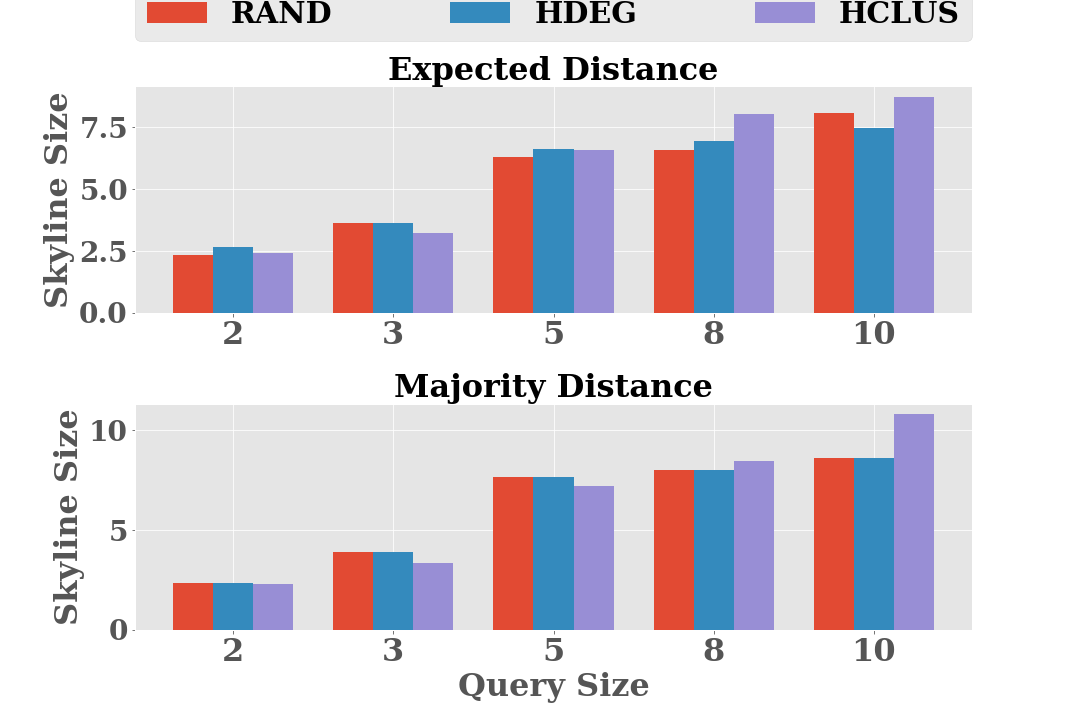} & \includegraphics[scale=0.15]{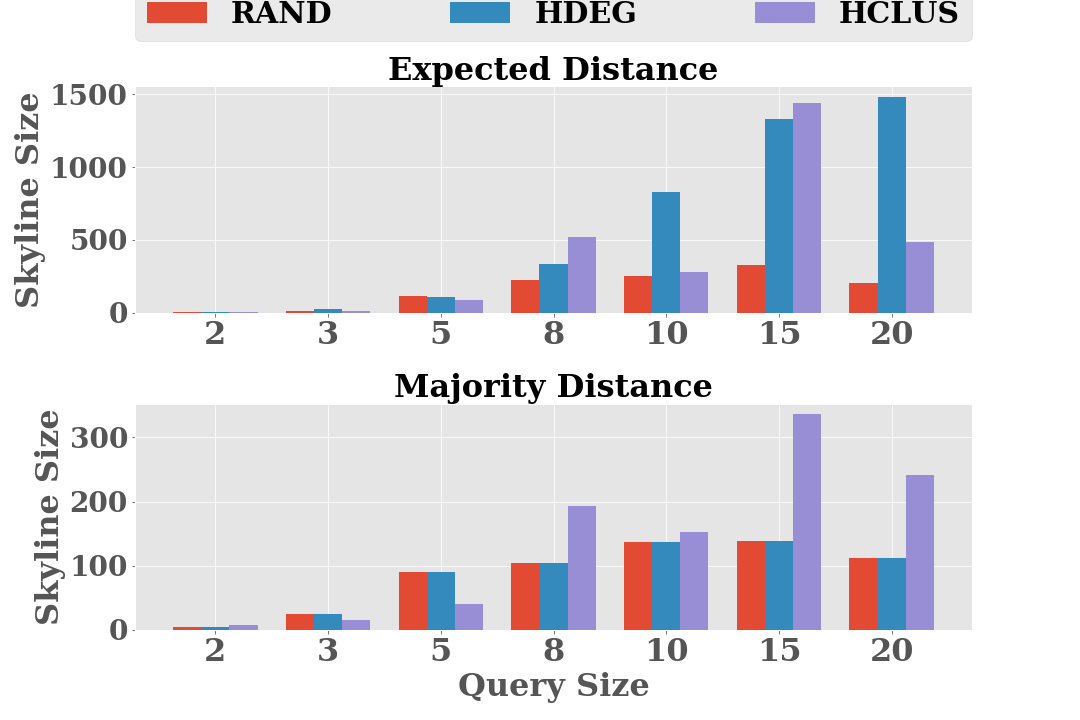} & \includegraphics[scale=0.15]{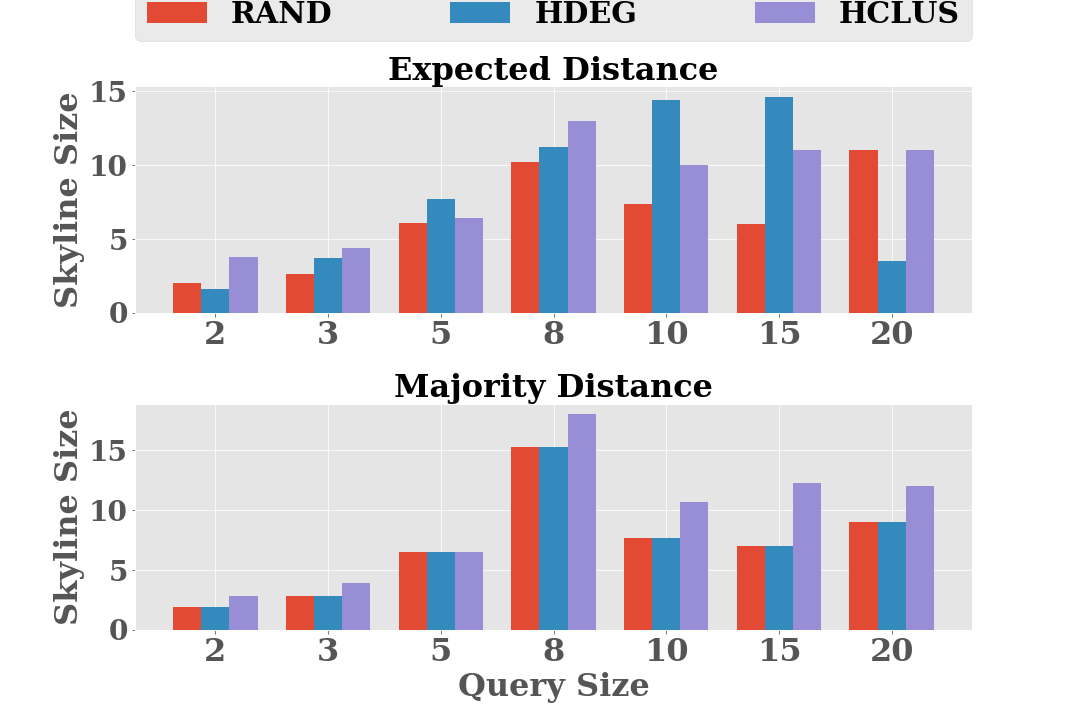}\\
(a) Minnasota Road Network  & (b) P2P Network & (c) USA Road Network \\
\end{tabular}
\caption{Query size Vs. Skyline size plot for the Minnasota Road Network, P2P Network, and USA Road Network datasets.}
\label{Fig:2}
\end{figure*}

In this section we describe the experimental validations of our proposed approach. Initially, we start by describing the datasets.
\subsection{Datasets} In our study, we have used three different datasets appeared in three different contexts described below.
\begin{itemize}
\item \textbf{Minnesota Road Network (MRN)} \cite{nr}: This is a road network dataset  of the Minnasota city. Here, the junctions are represented by the nodes, and if two junctions are connected by a road then the corresponding two vertices are connected by an edge.
\item \textbf{P2P Network} \cite{leskovec2007graph, ripeanu2002mapping}:This dataset contains a sequence of snapshots of the Gnutella peer-to-peer file sharing network from August 2002. There are total of 9 snapshots of Gnutella network collected in August 2002. Nodes represent hosts in the Gnutella network topology and edges represent connections between the Gnutella hosts.
\item \textbf{USA Road Network (URN)} \cite{nr}: This dataset describes a road network from the United States. Here, vertices represent the junctions, and an edge between signifies that the corresponding two junctions are are connected by road.   
\end{itemize}
Please refer to Table \ref{tab:dataset} for basic statistics of the datasets. All the datasets are undirected and unweighted. Probability of existence and weight of each edge is chosen from the intervals $(0,1]$ and $[10,100]$ uniformly at random.
\begin{table}
  \caption{Basic Statistics of the Datasets}
  \label{tab:dataset}
  \begin{tabular}{ccccc}
    \toprule
    Dataset &n&m& Density & Avg. Degree\\
    \midrule
    \textbf{MRN} & 2642 & 3300 & $9.46 \times 10^{-4}$& 2 \\
   \textbf{P2P Network} & 8114 & 26013 & $7.90 \times 10^{-4}$ & 6.41\\
    \textbf{URN} & 129164 & 165435 & $1.98 \times 10^{-5}$ & 2.56\\
  \bottomrule
\end{tabular}
\end{table}
\subsection{Experimental Setup}
In our study the following three different query vertex selection strategies have been adopted:
\begin{itemize}
\item \textbf{RAND}: By this method, to select $k$ query vertices first one is chosen randomly and remaining $(k-1)$ query vertices are chosen from the two hop neighbors of the initially selected vertex uniformly at random.
\item \textbf{HDEG}: By this method, to select $k$ query vertices first the subset of the nodes whose degree is more than a threshold value are marked and a node is chosen uniformly at random as a query vertex. Remaining $(k-1)$ are chosen from the two hop neighbors of the initially selected vertices uniformly at random.
\item \textbf{HCLUS}: This method is exactly the same as HDEG, except the case that, for choosing the first query vertex the subset of vertices are chosen based on the clustering coefficient of nodes.
\end{itemize}
Based on the selection strategy, we choose the query size from the set $\{2, 3, 5, 8, 10, 15, 20\}$. The experiments are repeated for 10 times.
All the algorithms have been implemented with Python 3.5 + NetworkX 2.1 environment on a HPC Cluster with 5 nodes each of them having 64 cores and 160 GB of memory and the implementations are available at \url{https://github.com/BITHIKA1992/Skyline_Uncertain_Graph/}
\subsection{Goals of the Experiments}
The experiments that have been conducted here aim to address the following questions:
\begin{itemize}
\item \emph{Efficiency of the Pruning Strategies}: As the number of query vertices increases, what is the fraction of data vertices removed before distance computation? 
\item \emph{Query Size Vs. Skyline Vertices}: Under different query vertex selection strategies how the cardinality of the skyline vertex set changes with respect to the query size?
\item  \emph{Distance Metric Vs. Skyline Vertices}: For a fixed query selection strategy and query size, how the cardinality of the skyline vertices changes with respect to the distance metric?
\item \emph{Query Selection Strategy Vs. Skyline Vertices}: For a fixed query size and distance metric how the cardinality of the skyline vertices changes with respect to query selection strategy?
\item \emph{Query Size Vs. Computational Time}: For a fixed query size and distance metric, how computational time grows with respect to query size?
\end{itemize}
\subsection{Results and Discussion}
Here, we address the research questions that we have raised.

\subsubsection{Efficiency of the Pruning Strategies} 
As we apply BFS pruning in each dataset, it returns the vertices from the largest component. This reduces $3000$, and $114$ number of vertices for URN and P2P network dataset. For distance based pruning, we have taken the threshold value as $400$, considering 4-hop path with the maximum edge weight $100$. In Figure \ref{Fig:1}, we show the box plot for the candidate size with respect to each query size and the query selection strategy. It can be observed that the candidate size for RAND selection strategy is less than other two, in all the datasets, which is trivial to convince. For P2P network, the inter quartile range is very high compared to other datasets. This is due to the reason of high average degree in the network. Also, for RAND selection strategy, this range is the highest for small query size. This is due to the existence of various small size component in the network. Both the road networks are very sparse and for the large query sizes like 15, 20, the candidate size becomes very small and the variance also reduces. With this sparsity for small road network MRN, it is impossible to find the connected vertices from all the query vertices within the distance of 400. So, we remove the results for query size of 15 and 20 in MRN dataset.

\begin{table*}[h]
 \caption{Computational time requirement (in Secs.) for finding skyline vertex set generation for Minnesota Road Network (MRN), P2P Network, and USA Road Network (URN) Dataset}
    \label{Tab:Time}
\resizebox{1 \textwidth}{!}{ 
\begin{tabular}{ | c | c | c | c | c | c | c | c | c | c |}
    \hline
    \multirow{ 2}{*}{\textbf{Dataset}} & \multirow{ 2}{*}{\textbf{Query Size}} &\multicolumn{3}{|c|}{\textbf{Step 1}} 
    &\multicolumn{2}{|c|}{\textbf{Step 2}}
    & \textbf{Step 3}
       &\multicolumn{2}{|c|}{\textbf{Total Time}}\\
    \cline{3-10} 
     &  & \textbf{Sample Gen} & \textbf{B.F.S Pruning} & \textbf{Distance Pruning} & \textbf{MD Comp. Time} & \textbf{ED Comp. Time} & \textbf{Skyline Comp. Time} & \textbf{MD}& \textbf{ED} \\ \hline
  \multirow{ 5}{*}{MRN}  &2 & 28.6028 & 0.0028  & 0.5797  & 0.2195 & 0.0328 & 0.0019 & 29.4070 & 0.6174 \\ \cline{2-10} 
    & 3 & 29.3554  & 0.0029   & 0.6314  & 0.2232 & 0.0320 & 0.0021 & 30.2152 & 0.6685\\ \cline{2-10} 
    & 5 & 28.9936  & 0.0026   & 0.6565  & 0.2544 & 0.0365 & 0.0025 & 29.9098 & 0.6982 \\ \cline{2-10} 
    & 8 & 30.3126  & 0.0029  & 0.7480  & 0.2443  & 0.0382 & 0.0029  & 31.3109 & 0.7922 \\ \cline{2-10} 
    & 10 & 29.3217  & 0.0029   & 0.8090  & 0.3156 & 0.0487 & 0.0036 & 30.4530 & 0.8643 \\ \cline{2-10}
    \hline
    \hline
      \multirow{ 7}{*}{P2P Network}   &2 & 1585.5587 & 0.0135  & 494.5580  & 16676.0924 & 156.4454 & 0.0026 & 18756.2254 & 651.0197 \\ \cline{2-10} 
    & 3 & 1596.0541  & 0.0142   & 737.8904  & 21824.5948 & 211.6984 & 0.0030 & 24158.5568  &  949.6062 \\ \cline{2-10}
    & 5 & 1594.3928  & 0.0149   & 1122.0655  & 27233.3476 & 299.2759 & 0.0159 & 29949.8369 & 1421.3723 \\ \cline{2-10} 
    & 8 & 1583.5686 & 0.0148  & 1256.8362  & 26993.8274  & 406.0604 & 0.3052  & 29834.5524 & 1663.2167 \\ \cline{2-10}
    & 10 & 1619.4677  & 0.0140   & 1654.2035  & 39048.2783 & 572.3251 & 2.3940 & 42324.3577 & 2228.9368 \\ \cline{2-10}
    & 15 & 1567.8540  & 0.0153   & 2003.2736  & 44549.2078 & 707.3537 & 38.9008 & 48159.2516 & 2749.5436 \\ \cline{2-10}
    & 20 & 1611.6262  & 0.0168  & 2826.1580  & 60793.5038  & 954.3398 & 661.4532  & 65892.7582 & 4441.9679 \\ \cline{2-10}
    \hline
    \hline
      \multirow{ 7}{*}{URN}  &2 & 137954.2686 & 0.1908  & 52.9266  & 0.7025 & 0.0650 & 0.0039 & 138008.0926 & 53.1864 \\ \cline{2-10} 
    & 3 & 139426.9728  & 0.1735  & 48.0920  & 0.7844 & 0.0730 & 0.0041& 139476.0271 & 48.3427 \\ \cline{2-10} 
    & 5 & 138948.9628  & 0.1743   & 54.1615  & 1.3545 & 0.1053 & 0.0063 & 139004.6596 & 54.4476 \\ \cline{2-10}
    & 8 & 163983.4069  & 0.1838  & 51.6949  & 1.2328  & 0.0928 & 0.0043  & 164036.5230 &51.9759 \\ \cline{2-10}
    & 10 & 162627.5420  & 0.1947   & 57.9749  & 1.4499 & 0.1180 & 0.0070 & 162687.1688 &58.2948\\ \cline{2-10}
    & 15 & 115441.9538  & 0.1738   & 67.8932  & 0.8683 & 0.0847 & 0.0167 & 115510.9059 &68.1685 \\ \cline{2-10}
    & 20 & 114801.2682  & 0.1805  & 79.6447  & 0.1371  & 0.0241 & 0.0107  & 114881.2415 & 79.8602\\ \cline{2-10}
    \hline
    \hline
    \end{tabular}
    }
\end{table*}

\subsubsection{Query Size Vs. Skyline Vertices} 
In Figure \ref{Fig:2}, we show the plot for query size Vs. skyline size, with two distance metrics and three query selection strategies. In this part, we describe the comparison of sizes. From all the 10 executions, here we report the mean values for the skyline size. With the increase in query size, the skyline size increases. However, for URN  dataset in Figure \ref{Fig:2}(c), the skyline size decreases for large value of query size. The reason is due to small size of candidate skyline, which can be verified from the Figure \ref{Fig:1}(c). Also, for both the road network datasets the maximum skyline size reaches to approximately 15, whereas for the P2P network it reaches to around 1500. This due to its candidate size. For, both the cases, at large value of query size the ratio of candidate to skyline size is very small. As the number of query vertices increase, the chance of domination decreases.

\subsubsection{Distance Metric Vs. Skyline Vertices} 
In this part, referring to Figure \ref{Fig:2}, we describe the behavior of skyline size with respect to different distance metrics. For the road networks in Figure \ref{Fig:2}(a) and (c), the skyline size is similar in both the datasets. However, for the P2P network in Figure \ref{Fig:2}(b), the skyline size in the expected distance ($\approx$ max 1500) is much more than the majority distance ($\approx$ max 300). The reason lies on the networks high average degree value and the density. As the number of paths increases between a query vertex to a data vertex, the expected distance value is unable to dominate other data vertices. This results into large size of skyline vertex set. This can be verified from Figure \ref{Fig:2}(b), by looking into HDEG and and HCLUS selection strategies, where it differs from the expected distance results. However, for RAND, the size is similar in both the distances. From the experiments, we also observe that for a particular query vertex set the skyline vertices may not be the same from both the distance metrics.

\subsubsection{Query Selection Strategy Vs. Skyline Vertices} 
In this part, referring to Figure \ref{Fig:2}, we describe the behavior of skyline size with respect to different query selection strategies. First, we describe the threshold value selected for HDEG and HCLUS for different datasets. As the P2P network dataset consists of high degree nodes, we select the high degree threshold value as 15, and it returns 440 nodes. In case of both the road networks, the maximum degree is around 5. Hence, for MRN and URN datasets, this threshold value is considered as 2 and 3, respectively. The clustering coefficient threshold is taken as 0 as the clustering coefficient for all the networks are very less. From Figure \ref{Fig:2}, the main observation is that for all the selection strategies the skyline size does not vary much for smaller query size. Whereas, for the large value of query size, HCLUS gives maximum skyline vertices.

\subsubsection{Computational Time} 
 Table \ref{Tab:Time} contains the stepwise computational time requirement to find  skyline vertices for different datasets. From the table, it can be observed that for all the datasets as the query size increases, time requirement for finding out the skyline vertex set also increases. Due to the change in the query size, required time for distance\mbox{-}based pruning, distance and skyline computation (using BNL) increases. Also, for all the datasets, the main time requirement is due to the sample graph generation. As in case of expected distance sample generation is not required, hence, in this distance setting time requirement is much less compared to the majority distance. In particular, for query size $2$, the ratio between the computational time requirement for majority distance to expected distance for MRN, P2P, and URN are $47$, $28$, and $2556$, respectively.
 \par Now, we proceed for the dataset specific observations. For the P2P Network dataset, when the query size increases beyond $10$, there is a sharp increase in the skyline computation time. This is due to the following two reasons. From the Figure \ref{Fig:1}(b) and \ref{Fig:2}(b), it can be observed that candidate and skyline size are more compared to the previous query sizes.

\section{Conclusion and Future Directions} \label{Sec:CFD}
In this paper, we introduce the problem of dynamic skyline queries on uncertain graphs for two different distance measures, namely, majority distance and expected distance. For this problem, we have proposed a methodology having three main steps: pruning, distance computation, and skyline vertex set generation. The proposed methodology has been analyzed to understand its time and space requirement. The experimental results demonstrate that it can find out the skyline vertex set with reasonable computation time, particularly for the expected distance.

 Now, this study can be extended in several directions. It will be an interesting future study to come up with efficient methodology, which can reduce the computational time even for majority distance. One possible way could be parallelizing the sample graph generation. It will be an important future work to provide a sample bound for the majority distance case. The minimum number of samples from the possible world, one should choose to answer the skyline with more than certain threshold probability.




%
\begin{acks}
Authors want to thank Ministry of Human Resource and Development (MHRD), Government of India, for sponsoring the project, E-business Center of Excellence under the scheme of Center for Training and Research in Frontier Areas of Science and Technology (FAST), Grant No. F.No.5-5/2014-TS.VII.
\end{acks}

%
\bibliographystyle{ACM-Reference-Format}
\bibliography{sample}
\end{document}